\theoremstyle{plain}
\theoremstyle{definition}
  \newtheorem{defn}[thm]{Definition}
  \newtheorem{exmp}[thm]{Example}
\def\ps@pprintTitle{%
 \let\@oddhead\@empty
 \let\@evenhead\@empty
 \def\@oddfoot{\centerline{\thepage}}%
 \let\@evenfoot\@oddfoot}
\begin{document}

\newcommand{\oto}{{\to\hspace*{-3.1ex}{\circ}\hspace*{1.9ex}}}
\newcommand{\lam}{\lambda}
\newcommand{\da}{\downarrow}
\newcommand{\Da}{\Downarrow\!}
\newcommand{\D}{\Delta}
\newcommand{\ua}{\uparrow}
\newcommand{\ra}{\rightarrow}
\newcommand{\la}{\leftarrow}
\newcommand{\lra}{\longrightarrow}
\newcommand{\lla}{\longleftarrow}
\newcommand{\rat}{\!\rightarrowtail\!}
\newcommand{\up}{\upsilon}
\newcommand{\Up}{\Upsilon}
\newcommand{\ep}{\epsilon}
\newcommand{\ga}{\gamma}
\newcommand{\Ga}{\Gamma}
\newcommand{\Lam}{\Lambda}
\newcommand{\CF}{{\cal F}}
\newcommand{\CA}{{\mathcal{A}}}
\newcommand{\CG}{{\cal G}}
\newcommand{\CH}{{\cal H}}
\newcommand{\CN}{{\mathcal{N}}}
\newcommand{\CB}{{\cal B}}
\newcommand{\CT}{{\cal T}}
\newcommand{\CS}{{\cal S}}
\newcommand{\CP}{{\cal P}}
\newcommand{\CU}{\mathcal{U}}
\newcommand{\CW}{\mathcal{W}}
\newcommand{\CQ}{\mathcal{Q}}
\newcommand{\mq}{\mathcal{Q}}
\newcommand{\cu}{{\underline{\cup}}}
\newcommand{\ca}{{\underline{\cap}}}
\newcommand{\nb}{{\rm int}}
\newcommand{\Si}{\Sigma}
\newcommand{\si}{\sigma}
\newcommand{\Om}{\Omega}
\newcommand{\bm}{\bibitem}
\newcommand{\bv}{\bigvee}
\newcommand{\bw}{\bigwedge}
\newcommand{\dda}{\downdownarrows}
\newcommand{\dia}{\diamondsuit}
\newcommand{\y}{{\bf y}}
\newcommand{\colim}{{\rm colim}}
\newcommand{\fR}{R^{\!\forall}}
\newcommand{\eR}{R_{\!\exists}}
\newcommand{\dR}{R^{\!\da}}
\newcommand{\uR}{R_{\!\ua}}
\newcommand{\swa}{{\swarrow}}
\newcommand{\sea}{{\searrow}}
\newcommand{\bbA}{{\mathbb{A}}}
\newcommand{\frX}{{\mathfrak{X}}}
\newcommand{\frx}{{\mathfrak{x}}}
\newcommand{\frY}{{\mathfrak{Y}}}
\newcommand{\fry}{{\mathfrak{y}}}
\newcommand{\frZ}{{\mathfrak{Z}}}
\newcommand{\frz}{{\mathfrak{z}}}

\newcommand{\id}{{\rm id}}
\newcommand{\bbU}{{\mathbb{U}}}
\newcommand{\bbP}{{\mathbb{P}}}
\newcommand{\bbT}{{\mathbb{T}}}
\newcommand{\bbS}{{\mathbb{S}}}
\newcommand{\CV}{{\mathcal{V}}}
\newcommand{\sU}{{\sf{U}}}
\newcommand{\sV}{{\sf{V}}}
\newcommand{\sW}{{\sf{W}}}
\newcommand{\sP}{{\sf P}}
\newcommand{\sy}{{\sf{y}}}
\newcommand{\sk}{{\sf{k}}}
\newcommand{\sfs}{{\sf{s}}}
\newcommand{\h}{\text{-}}

\numberwithin{equation}{section}
\renewcommand{\theequation}{\thesection.\arabic{equation}}

\title[A Note on the Topologicity of Quantale-Valued Topological Spaces]{A Note on the Topologicity of Quantale-Valued Topological Spaces{\rsuper*}
}

\author{Hongliang Lai}
\address{School of Mathematics, Sichuan University, Chengdu 610064, China}
\email{hllai@scu.edu.cn}

\author{Walter Tholen} 
\address{Department of Mathematics and Statistics, York University, Toronto, Canada}
\email{tholen@mathstat.yorku.ca}

\dedicatory{Dedicated to Ji\v{r}\'i Ad\'amek, admired mathematician and friend}


\titlecomment{{\lsuper*}Partial financial assistance by National Natural Science Foundation of China (11101297), International Visiting Program for Excellent Young Scholars of Sichuan University, and by the Natural Sciences and Engineering Research Council (NSERC) of Canada is gratefully acknowledged. This work was initiated while the first author held a Visiting Professorship at York University in 2015-16. It was presented in part by the second author at a session of the {\em Peripatetic Seminar on Sheaves and Logic}, held in March 2016 at the Technical University of Braunschweig to mark 
Ji\v{r}\'i Ad\'amek's retirement, and it was finalized while the second author was visiting the Centre of Australian Category Theory at Macquarie University in December 2016, with the kind support and hospitality of its members.}

\begin{abstract}
For a quantale $\sV$, the category $\sf V$-{\bf Top} of $\sV$-valued topological spaces may be introduced as a full subcategory of those $\sV$-valued closure spaces whose closure operation preserves finite joins. In generalization of Barr's characterization of topological spaces as the lax algebras of a lax extension of the ultrafilter monad from maps to relations of sets, for $\sV$ completely distributive, $\sV$-topological spaces have recently been shown to be characterizable by a lax extension of the ultrafilter monad to $\sV$-valued relations. As a consequence, $\sV$-{\bf Top} is seen to be a topological category over {\bf Set}, provided that $\sV$ is completely distributive. In this paper we give a choice-free proof that $\sV$-{\bf Top} is a topological category over $\bf Set$ under the considerably milder provision that $\sV$ be a spatial coframe. When $\sV$ is a continuous lattice, that provision yields complete distributivity of $\sV$ in the constructive sense, hence also in the ordinary sense whenever the Axiom of Choice is granted.
\end{abstract}

\maketitle


\section{Introduction}
Trivially, the category {\bf Top} of topological spaces may be considered as a full subcategory of the category {\bf Cls} of closure spaces, given by those closure spaces $X$ for which the closure operation $c:\sP X\to\sP X$ preserves finite unions. Non-trivially, in \cite{Barr}, Barr showed that {\bf Top} is isomorphic to the category of lax Eilenberg-Moore algebras of the ultrafilter monad $\mathbb U$, laxly extended from {\bf Set} to the category {\bf Rel} of sets and relations. In the language of monoidal topology \cite{MonTop}, this latter category is the category $(\mathbb U,{\sf 2})$-{\bf Cat} of small $(\mathbb U,{\sf 2})$-enriched categories, with $\sf 2$ denoting the two-element chain, considered as a quantale, while {\bf Cls} is the category $(\mathbb P,{\sf 2})$-{\bf Cat}, with $\mathbb P$ denoting the power set monad, suitably extended from {\bf Set} to {\bf Rel}.

In \cite{LT} the authors replaced the quantale $\sf 2$ by an arbitrary quantale $\sV$ and revisited the category $\sV$-{\bf Cls} of $\sV$-valued closure spaces, comprehensively studied earlier in \cite{Seal2009} when $\sV$ is completely distributive, and then considered its full subcategory $\sV$-{\bf Top} of $\sV$-valued topological spaces. (We caution the reader that these terms appear in the literature also for rather different concepts; see Remark \ref{caution}(2) below.) For $\sV$ the Lawvere quantale $[0,\infty]$ (see \cite{Lawvere}), $\sV$-valued topological spaces are exactly approach spaces as introduced by Lowen \cite{Lowen} in terms of a point-set distance. For $\sV$ the quantale $\bf{\Delta}$ of distance distribution functions, they are probabilistic approach spaces, as considered recently in \cite{JagerApp}, which generalize Menger's \cite{Menger1942} ``statistical metric spaces", just as approach spaces generalize metric spaces; see also \cite{BrockKent, HofmannReis}.
The main result of \cite{LT} confirms that, when the quantale $\sV$ is completely distributive, the Barr representation of topological spaces remains valid at the $\sV$-level once the power set monad and the ultrafilter monad are suitably extended from {\bf Set} to the category $\sV$-{\bf Rel} of sets and $\sV$-valued relations. Briefly: the category $\sV$-{\bf Top}, defined as the full subcategory of $(\mathbb P,\sV)$-{\bf Cat} given by those $\sV$-valued closure spaces whose structure preserves finite joins, is isomorphic to $(\mathbb U,\sV)$-{\bf Cat} -- provided that $\sV$ is completely distributive.

Since $(\mathbb T,\sV)$-{\bf Cat} is easily seen to be a topological category (in the sense of  \cite{AHS}) over {\bf Set}, for any laxly extended {\bf Set}-monad $\mathbb T$ (see \cite{MonTop}), as a byproduct of the equivalence result of \cite{LT} one obtains that $\sV$-{\bf Top} is topological over {\bf Set} whenever $\sV$ is completely distributive. The question was posed by Dexue Zhang and Lili Shen whether topologicity may be confirmed without the provision of complete distributivity. In this paper we give a partial answer to this question, and we do so without invoking the Axiom of Choice, by showing that $\sV$-{\bf Top} lies bicoreflectively in the topological category $\sV$-{\bf Cls} and, hence, is topological itself -- provided that every element in {\sV} is the join of a set of coprime elements. This provision is equivalent to the complete lattice $\sV$ being a spatial coframe,
{\em i.e.}, being isomorphic to the lattice of closed sets of some topological space (see \cite{Johnstone1982}); in particular, binary joins must distribute over arbitrary infima in $\sV$. In the presence of the Axiom of Choice one can show that complete distributivity of a complete lattice $\sV$ is equivalent to $\sV$ being continuous and a spatial coframe (see \cite{Gierz2003}).

In the next section we recall the definitions and main examples of $\sV$-valued closure spaces and $\sV$-valued topological spaces, with a novel take on the transitivity/idempotency axiom for a $\sV$-valued closure operation which turns out to be useful in what follows. Section 3 recalls some known facts on spatial coframes vis-\'a-vis complete distributivity, but we do so being careful to avoid the Axiom of Choice to the extent possible. The main result of the paper is given in Section 4, where we establish the coreflector of $\sV$-{\bf Cls} onto $\sV$-{\bf Top}, by mimicking the construction of the additive core of a categorical closure operator, as given in \cite{DT}. Finally, in Section 5, we show that
the structure of a $\sV$-valued closure space and, hence, also the structure of a $\sV$-valued topological space, may be equationally defined within the category $\sV$-{\bf Cat} of $\sV$-categories, which is somewhat surprising when one considers the fact that the axioms governing reflexivity/extensitivity and transitivity/idempotency are given in terms of inequalities, rooted in the order of $\sV$.

\section{$\sV$-valued topological spaces}
Throughout the paper, let $\sV=(\sV,\otimes,{\sf k})$ be a (unital, but not necessarily commutative) {\em quantale, i.e.,} a complete lattice with a monoid structure whose binary operation $\otimes$ preserves suprema in each variable. We make no additional provisions for the  tensor-neutral element $\sk$
vis-\`{a}-vis the bottom and top elements in $\sV$; in particular, we  exclude neither the case $\sk=\bot$ (so that $|\sV|=1$, {\em i.e.}, $\sV$ may be {\em trivial}), nor $\sk < \top$ ({\em i.e.}, $\sV$ may fail to be {\em integral}). ${\sf P}X$ denotes the power set of the set $X$, and $\sV^X$ is the set of maps $X\to\sV$.

We use the following simplification of the key definition of \cite{LT} which, in turn, builds on the equivalent treatment of $\sV$-valued closure spaces given in \cite{Seal2009}:
\begin{defn} A {\em $\sV$-valued closure space} is a set $X$ equipped with a map $c: \sP X\to \sV^X$ satisfying
the reflexivity and transitivity conditions

\begin{enumerate}
\item[{\rm (R)}]$\quad \forall x\in A\subseteq X:$ \;\;$ \sk\leq (cA)(x)$,
\item[{\rm (T)}]$\quad\forall A, B\subseteq X, x\in X:$ \;\,$ \big(\bw_{y\in B}(cA)(y)\big)\otimes (cB)(x)\leq (cA)(x)$.
\end{enumerate}

$(X,c)$ is a {\em $\sV$-valued topological space} if, in addition, $c:\sP X\to \sV^X$ is {\em finitely additive}, {\em i.e.}, preserves finite joins:

\begin{enumerate}
\item[{\rm (A)}]$\quad \forall A,B\subseteq X,\,x\in X: \quad (c\emptyset)(x)=\bot\quad\text{ and }\quad c(A\cup B)(x)=(cA)(x)\vee (cB)(x).$
\end{enumerate}

 A map $f:X\to Y$ of $\sV$-closure spaces $(X,c), (Y,d)$ is {\em continuous} (or, depending on context, {\em contractive}) if

 \begin{enumerate}
 \item[{\rm (C)}] $\quad \forall A\subseteq X,\, x\in X:\quad (cA)(x)\leq d(fA)(fx).$
 \end{enumerate}

We obtain
the category $\sV\text{-}{\bf Cls}$  of $\sV$-valued closure spaces and its full subcategory $\sV\text{-}{\bf Top}$ of $\sV$-valued topological spaces, and their continuous maps.
\end{defn}

\begin{rem}\label{caution} \leavevmode
  \begin{enumerate}
\item A $\sV$-valued closure space structure $c$ on $X$ satisfies the monotonicity condition $(\emptyset\neq B\subseteq A\subseteq X\Longrightarrow cB\leq cA)$. If $\sV$ is integral (so that ${\sf k}=\top$), or if $c$ is finitely additive, then the restriction $B\neq\emptyset$ is, of course, not needed.

\item The notion of $\sV$-valued closure operator as used here must be carefully distinguished  from the $\sV$-closure operators considered in other papers, notably in \cite{Demirci2007}. Rather than investigating operators $c:\sP X\to\sV^X$, which simply generalize, from $\sf{2}$ to $\sV$, the truth value of membership in the closure of a (standard) subset of $X$, \cite{Demirci2007} studies operators $c:\sV^X\to\sV^X$, thus relativizing also the arguments of $c$. Similarly, in \cite{Hohle2001} and other literature, the term $\sV$-valued topological space has a meaning very different from the one used here. Indeed, H\"{o}hle \cite{Hohle2001} relativizes the open-subset concept through the study of certain substructures $\tau\subseteq\sV^X$; as further examples of variations of this type of approach, we refer also to \cite{HohleSostak1999, Rodabough2007, Demirci2014}. 
\end{enumerate}
\end{rem}

\begin{exmp}
  \leavevmode
  \begin{enumerate}
\item For the terminal quantale {\sf 1} one has ${\sf 1}\text{-}{\bf Cls}={\sf 1}\text{-}{\bf Top}\cong{\bf Set}$.

\item For the two-element chain ${\sf 2}=\{\bot<\top\}$, considered as a quantale $({\sf 2},\wedge,\top)$, under the identification ${\sf 2}^X=\sP X$ conditions {\rm (R)} and {\rm  (T)} read respectively as $A\subseteq cA$
and $(B\subseteq cA\Rightarrow cB\subseteq cA)$, for all $A,B\subseteq X$, where, in the presence of {\rm (R)}, condition {\rm (T)}
breaks down to the conjunction of the monotonicity and idempotency conditions $(B\subseteq A\Rightarrow cB\subseteq cA)$ and $ccA\subseteq cA$, for all $A,B\subseteq X$. With {\rm (A)} and {\rm (C)} translating to $c\emptyset=\emptyset, \, c(A\cup B)=cA\cup cB$ and
$f(cA)\subseteq c(fA)$ for all $A,B\subseteq X$, one obtains respectively ${\sf 2}\text{-}{\bf Cls}={\bf Cls}$ and ${\sf 2}\text{-}{\bf Top}\cong{\bf Top}$, {\em i.e.}, the standard categories of closure spaces and of topological spaces, as described by closure operations. The presentation of closure spaces as, in the language of \cite{MonTop}, $(\mathbb P,\sf 2)$-categories (with $\mathbb P$ the powerset monad), goes back to \cite{Seal2005}; instead of $\mathbb P$ one may also use the ``up-set monad"  \cite{Seal2009}.

\item For any (multiplicatively written) monoid $M$ with neutral element $\eta$, we consider the quantale freely generated by $M$; it is given by the power set $\sP M$, ordered by inclusion and provided with the tensor product that extends the multiplication of $M$ to its subsets: $AB=\{\alpha\beta\,|\,\alpha \in A,\, \beta \in B\}$. Note that, since $\{\eta\}$ is neutral in $\sP M$, this quantale is integral (so that $\{\eta\}$ is its top element) only if $M$ is trivial, and it is commutative only if $M$ is. Since maps $\sP X\to(\sP M)^X\cong (\sP X)^M$ correspond to maps $\sP X\times M\to \sP X$, defining
$$ x\in A\cdot\alpha:\Longleftrightarrow\alpha\in (cA)(x)$$
for all $x\in X, \,A\subseteq X,\, \alpha\in M$, we can rewrite a $\sV$-valued closure space structure $c$ on $X$ as a lax right action $\cdot$ of the monoid $M$ on the ordered set $\sP X$. Indeed, conditions {\rm (R)} and {\rm (T)} read as
$$A\subseteq A\cdot\eta\quad\text{and}\quad (B\subseteq A\cdot\alpha\Longrightarrow B\cdot\beta\subseteq A\cdot(\alpha\beta))$$
or, equivalently, as
$$A\subseteq A\cdot\eta,\quad (A\cdot\alpha)\cdot\beta\subseteq A\cdot(\alpha\beta)\quad\text{and}\quad (B\subseteq A\Longrightarrow B\cdot\beta\subseteq A\cdot\beta),$$
for all $A,B\subseteq X,\, \alpha,\beta\in M$.
Continuity of a map $f:X\to Y$ of $\sP M$-valued closure spaces amounts to lax preservation of the lax right action:
$f(A\cdot \alpha)\subseteq (fA)\cdot\alpha$, for all $A\subseteq X, \,\alpha\in M$. For a $\sP M$-valued topological space, all translations $(-)\cdot\alpha:\sP X\to\sP X$ must preserve finite unions.

Note that, since $A\mapsto A\cdot\eta$ is a closure operation on X, one has a functor $\sP M\text{-}{\bf Cls}\to{\bf Cls}$ that restricts to $\sP M\text{-}{\bf Top}\to {\bf Top}$.

\item For the Lawvere quantale $(([0,\infty],\geq),+,0)$, using the point-set-distance function $\delta: X\times \sP X\to [0,\infty]$ with $\delta(x,A)=(cA)(x)$, we may re-state the above conditions as
\begin{enumerate}
\item [{\rm (R)}] $\quad\forall x\in A\subseteq X:\quad\delta(x,A)=0$,
\item [{\rm (T)}] $\quad\forall x\in X,\,A,B\subseteq X:\quad\delta(x,A)\leq {\rm sup}_{y\in B}\delta(y,A)+\delta(x,B)$,
\item [{\rm (A)}] $\quad\forall x\in X,\, A,B\subseteq X:\quad\delta(x,\emptyset)=\infty\quad$ and $\quad\delta(x,A\cup B)={\rm min}\{\delta(x,A),\delta(x,B)\},$
\item [{\rm (C)}] $\quad\forall x\in X,\,A\subseteq X:\quad \delta(fx,fA)\leq \delta(x,A).$
\end{enumerate}
The resulting category $[0,\infty]\text{-}{\bf Top}$ is the category $\bf App$ of {\em approach spaces}, as introduced by Lowen \cite{Lowen, Lowen2015} under a slight, but equivalent, variation of condition (T). The ambient category $[0,\infty]\text{-}{\bf Cls}$ was considered in \cite{Seal2005}.

\item Let $\&$ be a commutative monoid operation on $[0,1]$ with its natural order, preserving suprema in each variable (also known as a {\em left-continuous t-norm} on [0,1]), and having 1 as its neutral element -- such as the ordinary multiplication $\times$ of real numbers, the \L ukasiewicz operation $\alpha\&\beta=\rm{max}\{\alpha+\beta-1,0\}$, or the frame operation $\alpha\&\beta=\rm{min}\{\alpha,\beta\}$. For the quantale $[0,1]_{\&}=(([0,1],\leq),\&,1)$ we may then consider its coproduct ${\bf{\Delta}}_{\&}$ with the Lawvere quantale $[0,\infty]$ in the category of commutative quantales and their homomorphisms (= sup-preserving homomorphisms of monoids), which may be described as follows (see \cite{LaxDistLaws, GHK, LT}). (Note that, of course, $[0,1]_{\times}$ is isomorphic to $[0,\infty]$.) The underlying
set $\bf{\Delta}$ of ${\bf{\Delta}}_{\&}=({\bf{\Delta}},\odot,\kappa)$
of all {\em distance distribution functions} $\varphi: [0,\infty]\to[0,1]$, required to satisfy the left-continuity condition $\varphi(\beta)={\rm sup}_{\alpha<\beta}\varphi(\alpha)$ for all $\beta\in [0,\infty]$, inherits its order from $[0,1]$, and its monoid structure is given by the commutative
convolution product
\[(\varphi\odot\psi)(\gamma)={\rm sup}_{\alpha+\beta\leq\gamma}\varphi(\alpha)\&\psi(\beta).\]
The $\odot$-neutral function $\kappa$ satisfies $\kappa(0)=0$ and $\kappa(\alpha)=1$ for all $\alpha >0$. We note that $\kappa=\top$ in ${\bf{\Delta}}_{\&}$ (so ${\bf{\Delta}}_{\&}$ is integral), while the bottom element in ${\bf {\Delta}}_{\&}$ has constant value $0$; we write $\bot=0$.
With the quantale homomorphisms
$\sigma:[0,\infty]\to {\bf {\Delta}}_{\&}$
and $\tau:[0,1]_{\&}\to {\bf {\Delta}}_{\&}$,
defined by $\sigma(\alpha)(\gamma)=0\text{ if }\gamma\leq\alpha\text{, and }1$ otherwise, and
$\tau(u)(\gamma)=u\text{ if }\gamma>0\text{, and }0$ otherwise, every $\varphi\in {\bf{\Delta}}$ has a presentation
$$\varphi=\bv_{\alpha\in[0,\infty]}\sigma(\alpha)\odot\tau(\varphi(\alpha))
=\bv_{\alpha\in(0,\infty)}\sigma(\alpha)\odot\tau(\varphi(\alpha)),$$
which then shows that $\sigma,\tau$ serve as the coproduct injections of ${\bf{\Delta}}_{\&}$. In terms of a point-set-distance-distribution function $\delta:X\times\sP X\to {\bf \Delta}$, the relevant conditions describing the category
${\bf{\Delta}}_{\&}\text{-}{\bf Top}\cong {\bf ProbApp}_{\&}$ of {\em $\&$-probabilistic approach spaces} \cite{Jager, JagerApp} read as

\begin{enumerate}
\item [{\rm (R)}] $\quad\forall x\in A\subseteq X:\quad\delta(x,A)=\kappa$,
\item [{\rm (T)}] $\quad\forall x\in X,\,A,B\subseteq X:{\rm inf}_{y\in B}\delta(y,A)\odot\delta(x,B)\leq\delta(x,A)$,
\item [{\rm (A)}] $\quad\forall x\in X,\, A,B\subseteq X:\quad\delta(x,\emptyset)=0\quad$ and $\quad\delta(x,A\cup B)={\rm max}\{\delta(x,A),\delta(x,B)\},$
\item [{\rm (C)}] $\quad\forall x\in X,\,A\subseteq X:\quad \delta(x,A)\leq\delta(fx,fA).$
\end{enumerate}
\end{enumerate}
\end{exmp}

In the next section it will be convenient to use the following notation for any map $c: \sP X\to \sV^X$, which suggests itself by Seal's description \cite{Seal2009} of $\sV$-valued closure spaces: for all $A\subseteq X, x\in X$ and $v\in\sV$ we put
$$(\overline{c}A)(x):=\bv_{v\in\sV}v\otimes c(c^vA)(x),$$
with $c^vA:=\{z\in X\,|\,v\leq (cA)(z)\}.$ The map $\overline{c}$ plays the role of the ``composite of $c$ with itself"; indeed, we can reformulate (T), as follows:

\begin{lem}\label{cc}
Let $c: \sP X\to\sV^X$ be monotone. Then $c\leq \overline{c}$ if $c$ satisfies $\rm (R)$, and $c$ satisfies $\rm (T)$ if, and only if, $\overline{c}\leq c$. Furthermore, for any map $d:\sP X\to\sV^X$ with $d\leq c$ one has $\overline{d}\leq\overline{c}$.
\end{lem}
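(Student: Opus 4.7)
The plan is to unpack the definition $(\overline{c}A)(x)=\bigvee_{v\in\sV}v\otimes c(c^vA)(x)$ and exploit the tautological correspondence $z\in c^vA\iff v\leq(cA)(z)$; together with its ``global'' form $B\subseteq c^vA\iff v\leq\bigwedge_{y\in B}(cA)(y)$, this equivalence lets one trade quantification over $v\in\sV$ for quantification over subsets $B\subseteq X$, and is the sole non-trivial ingredient.

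First I would prove $c\leq\overline{c}$ under (R) by picking a sharp witness in the join. Given $A$ and $x$, set $v:=(cA)(x)$; then $x\in c^vA$ tautologically, so (R) applied to the pair $(x,c^vA)$ gives $\sk\leq c(c^vA)(x)$. Tensoring on the left by $v$ and using that $\sk$ is the tensor unit yields $(cA)(x)=v\otimes\sk\leq v\otimes c(c^vA)(x)\leq(\overline{c}A)(x)$.

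Next I would establish the equivalence of (T) and $\overline{c}\leq c$ via the Galois-like correspondence above. For ``$\overline{c}\leq c\Rightarrow$ (T)'', given $A,B,x$ pick $v:=\bigwedge_{y\in B}(cA)(y)$ so that $B\subseteq c^vA$; monotonicity of $c$ then upgrades $(cB)(x)$ to $c(c^vA)(x)$, and the chain $v\otimes(cB)(x)\leq v\otimes c(c^vA)(x)\leq(\overline{c}A)(x)\leq(cA)(x)$ closes. For the converse, given $v$ and $A$, set $B:=c^vA$; by definition of $c^v$, the value $v$ is a lower bound for $\{(cA)(y):y\in B\}$, so (T) directly yields $v\otimes c(c^vA)(x)=v\otimes(cB)(x)\leq(cA)(x)$, and taking the supremum in $v$ finishes.

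Finally, for the monotonicity of $c\mapsto\overline{c}$, the pointwise inequality $d\leq c$ immediately forces $d^vA\subseteq c^vA$ for every $v$. Two successive applications -- $d\leq c$ on the closure term, followed by monotonicity of $c$ on its set argument -- give $v\otimes d(d^vA)(x)\leq v\otimes c(d^vA)(x)\leq v\otimes c(c^vA)(x)$, and joining over $v$ yields $\overline{d}\leq\overline{c}$. There is no substantial obstacle; the only point that deserves care is the empty-set corner of the monotonicity clause, which is handled by the convention in Remark~\ref{caution}(1) together with the straightforward observation that in each critical case the relevant inclusion is witnessed.
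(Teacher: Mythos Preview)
Your proof is correct and follows essentially the same route as the paper's. The one noteworthy variation is in the first clause: the paper establishes $c\leq\overline{c}$ by choosing the witness $v=\sk$ and using (R) together with monotonicity of $c$ (via $A\subseteq c^{\sk}A$), whereas you choose $v=(cA)(x)$ and apply (R) directly to $x\in c^vA$, which avoids invoking monotonicity of $c$ altogether for that step. The remaining parts---the two directions of the (T)$\iff\overline{c}\leq c$ equivalence and the monotonicity of $c\mapsto\overline{c}$---are argued exactly as in the paper; your closing remark about the empty-set corner is unnecessary, since the lemma assumes $c$ monotone outright.
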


\begin{proof}
With the monotonicity of $c$ one obtains from (R)
$$ (cA)(x)\leq {\sf k}\otimes c(c^{\sf k}A)(x)\leq\bv_{v\in\sf V}v\otimes c(c^vA)(x)=(\overline{c}A)(x),$$
for all $A\subseteq X, x\in X$. If $c$ satisfies (T), for every $v\in\sV$ one considers $B:=c^vA$, so that $v\leq(cA)(y)$ for all $y\in B$. Then
$$ v\otimes (cB)(x)\leq (\bw_{y\in B}(cA)(y))\otimes(cB)(x)\leq (cA)(x),$$
and $(\overline{c}A)(x)\leq (cA)(x)$ follows. Conversely, assuming $\overline{c}\leq c$, for $A, B, x$ as in (T) one considers $v:=\bw_{y\in B}(cA)(y)$. Then $B\subseteq c^vA$, and with the monotonicity of $c$ one concludes
$$v\otimes (cB)(x)\leq v\otimes c(c^vA)(x)\leq (\overline{c}A)(x)\leq (cA)(x),$$
that is: (T). Finally, for $d\leq c$ one trivially has $d^vA\subseteq c^vA$ for all $v\in\sV$ and, hence,
\[
(\overline{d}A)(x)=\bv_{v\in\sV}v\otimes d(d^vA)(x)\leq\bv_{v\in \sV}v\otimes
c(d^vA)(x)\leq\bv_{v\in\sV}v\otimes c(c^vA)(x)=(\overline{c}A)(x).
\tag*{\qEd}
\]
\def\popQED{}
\end{proof}

Recall that an element $p$ in a poset $L$ is {\em coprime} when $p\leq\bv F$ with $F\subseteq L$ finite always gives some $x\in F$ with $p\leq x$; that is: when $p> \bot$ and, for all $u,v\in L$, one has $p\leq u\vee v$ only if $p\leq u$ or $p\leq v$. The poset $L$ is said to be {\em sup-generated by its coprime elements} if every element is the supremum of a set of coprime elements in $L$. We will shed light on the status of this property in the next section. Here we just use it to prove the following lemma and proposition.

\begin{lem}\label{coprime gen}
If $\sV$ is sup-generated by its coprime elements and $c$ is monotone, then
\[(\overline{c}A)(x)=\bv_{p\in\sV\; {\rm coprime}}p\otimes c(c^pA)(x)\]
for all $A\subseteq X, x\in X$.
\end{lem}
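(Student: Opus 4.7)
The plan is to show the two inequalities separately. The inequality
\[
\bv_{p\text{ coprime}} p\otimes c(c^pA)(x)\leq (\overline{c}A)(x)
\]
is immediate from the definition of $\overline{c}$, since the right-hand side is already the join over all $v\in\sV$ (coprime or not).

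For the nontrivial direction, I want to bound each term $v\otimes c(c^vA)(x)$ with $v\in\sV$ arbitrary by the join on the right. First I would use the hypothesis to write $v=\bv_{p\in P_v}p$ for some set $P_v$ of coprime elements with $p\leq v$ for every $p\in P_v$. The key observation is then a monotonicity property of the assignment $v\mapsto c^vA$: if $p\leq v$, then $c^vA\subseteq c^pA$, since $v\leq (cA)(z)$ forces $p\leq (cA)(z)$. Since $c$ itself is monotone, this gives $c(c^vA)\leq c(c^pA)$ pointwise for each $p\in P_v$.

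Now using that $\otimes$ preserves suprema in each variable, I compute
\[
v\otimes c(c^vA)(x)=\Bigl(\bv_{p\in P_v}p\Bigr)\otimes c(c^vA)(x)=\bv_{p\in P_v}p\otimes c(c^vA)(x)\leq \bv_{p\in P_v}p\otimes c(c^pA)(x),
\]
which is bounded above by $\bv_{p\text{ coprime}}p\otimes c(c^pA)(x)$. Taking the join over $v\in\sV$ yields $(\overline{c}A)(x)\leq \bv_{p\text{ coprime}}p\otimes c(c^pA)(x)$, finishing the proof.

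There is no real obstacle here; the argument is a routine combination of coprime sup-generation, the order-reversing behaviour of $v\mapsto c^vA$, and monotonicity of $c$ together with the distributivity of $\otimes$ over joins. The only point worth flagging is the direction of the containment $c^vA\subseteq c^pA$ for $p\leq v$, which is opposite to what the notation $c^{(-)}A$ might initially suggest but is exactly what the proof requires.
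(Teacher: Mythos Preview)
Your proof is correct and follows essentially the same route as the paper's: both establish the nontrivial inequality by writing each $v$ as a join of coprimes below it, distributing $\otimes$ over that join, and then using the containment $c^vA\subseteq c^pA$ for $p\leq v$ together with monotonicity of $c$. The paper's version is simply more terse, compressing your explicit intermediate steps into a single displayed line.
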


\begin{proof}
By hypothesis, every $v\in\sV$ can be written as $v=\bv_{p\leq v\;{\rm coprime}}p$, and with the monotonicity one obtains
\[v\otimes c(c^vA)(x)=\bv_{p\leq v\;{\rm coprime}}p\otimes c(c^vA)(x)\leq\bv_{p\in\sV\;{\rm coprime}}p\otimes c(c^pA)(x),\]
which shows $``\leq"$ of the desired equality; $``\geq"$ holds trivially.
\end{proof}

Let us use the following auxiliary notion and call $(X,c)$ a $\sV$-{\em valued pretopological space} if the map $c:\sP X\to\sV^X$ satisfies (R) and (A). With Lemma \ref{coprime gen} one easily sees that these properties survive the passage from $c$ to $\overline{c}$, as follows.

\begin{prop}\label{addpres}
For a $\sV$-valued pretopological space $(X,c)$, when $\sV$ is sup-generated by its coprime elements,  $(X,\overline{c})$ is also a $\sV$-valued pretopological space.
\end{prop}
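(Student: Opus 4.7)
The plan is to verify conditions (R) and (A) for $\overline{c}$ by treating each axiom separately, and to exploit the coprime sup-generation precisely at the point where it is needed, namely in establishing preservation of binary joins.

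First, (R) for $\overline{c}$ is essentially immediate from Lemma \ref{cc}: since $c$ satisfies (R), that lemma yields $c\leq\overline{c}$, hence $\sk\leq(cA)(x)\leq(\overline{c}A)(x)$ whenever $x\in A$. Note that $c$ is monotone because (A) makes it finitely additive, so the hypotheses of Lemma \ref{cc} and Lemma \ref{coprime gen} are met.

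Next, for the empty-set clause of (A), I would unravel the definition $(\overline{c}\emptyset)(x)=\bv_{v\in\sV}v\otimes c(c^v\emptyset)(x)$ and split on $v$. The term with $v=\bot$ collapses because $\bot\otimes(-)=\bot$. For $v>\bot$, using $(c\emptyset)(z)=\bot$ (from (A) for $c$), the set $c^v\emptyset=\{z:v\leq\bot\}$ is empty, and a second application of (A) gives $c(c^v\emptyset)(x)=(c\emptyset)(x)=\bot$. Taking the join yields $(\overline{c}\emptyset)(x)=\bot$.

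The main step, and the only place where sup-generation by coprimes genuinely enters, is the preservation of binary unions. By Lemma \ref{coprime gen} I may restrict the defining join for $\overline{c}(A\cup B)$ to coprime indices:
\[(\overline{c}(A\cup B))(x)=\bv_{p\;{\rm coprime}}p\otimes c(c^p(A\cup B))(x).\]
The crucial observation is that for a coprime $p$, the set $c^p(A\cup B)$ coincides with $c^pA\cup c^pB$: indeed, by (A) applied to $c$, $p\leq (c(A\cup B))(z)=(cA)(z)\vee(cB)(z)$ iff $p\leq(cA)(z)$ or $p\leq (cB)(z)$, which is precisely where coprimeness is used. Applying (A) for $c$ once more,
\[c(c^p(A\cup B))(x)=c(c^pA)(x)\vee c(c^pB)(x),\]
and since $\otimes$ distributes over binary joins, I can split the summand inside the join over $p$ and reassemble using Lemma \ref{coprime gen} once in each direction to conclude
\[(\overline{c}(A\cup B))(x)=(\overline{c}A)(x)\vee(\overline{c}B)(x).\]

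The only real obstacle is the binary-join case, and what makes it work is the pointwise equality $c^p(A\cup B)=c^pA\cup c^pB$ for coprime $p$; this identity fails for general $v\in\sV$, which is exactly why the coprime-generation hypothesis on $\sV$ is invoked at this juncture.
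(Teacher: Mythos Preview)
Your proof is correct and follows essentially the same route as the paper: (R) via Lemma~\ref{cc}, the empty-set clause by unravelling the definition (your case split on $v=\bot$ versus $v\neq\bot$ is just a slightly more explicit version of the paper's one-line computation), and the binary-union clause via Lemma~\ref{coprime gen}, the identity $c^p(A\cup B)=c^pA\cup c^pB$ for coprime $p$, finite additivity of $c$, and distributivity of $\otimes$ over joins. Nothing is missing and no step diverges meaningfully from the paper's argument.
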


\begin{proof}
(R) follows from Lemma \ref{cc}, and for (A) we first note

$$(\overline{c}\emptyset)(x)=\bv_{v\in\sV}v\otimes c(c^v\emptyset)(x)=\bv_{v\in\sV}v\otimes\bot=\bot$$
for all $x\in X$. Furthermore, since for $p\in\sV$ coprime one obviously has $c^p(A\cup B)=(c^pA)\cup(c^pB)$ whenever $A,B\subseteq X$, we obtain with Lemma \ref{coprime gen}
\begin{align*}
\overline{c}(A\cup B)(x) &= \bv_{p}p\otimes c(c^p(A\cup B))(x) \\
&= \bv_pp\otimes (c(c^pA)(x)\vee c(c^pB)(x))\\
&= (\bv_pp\otimes c(c^pA)(x))\vee(\bv_pp\otimes c(c^pB)(x))\\
&= (\overline{c}A)(x)\vee(\overline{c}B)(x).
\tag*{\qEd}
\end{align*}
\def\popQED{}
\end{proof}

\begin{rem}\label{VTopasPVCat}
  \leavevmode
\begin{enumerate}
\item For $\sV={\sf 2}$, $\sV$-valued pretopological spaces are precisely the usual {\em pretopological spaces} (also known as {\em \v{C}ech-topological spaces} \cite{Cech1966}; see also \cite{DT}), and for $\sV=[0,\infty]$ they go by the name {\em pre-approach spaces} (see \cite{ColebundersVerbeeck2000}).

\item In \cite{LT}, drawing on \cite{Seal2009}, various alternative, but equivalent descriptions of $\sV$-valued closure spaces and topological spaces are provided. First of all, a $\sV$-valued closure space structure $c$ on $X$ gives a family of maps
$(c^v:\sP X \to \sP X)_{v\in\sV}$ satisfying
\begin{enumerate}[leftmargin=10mm]
\item[{\rm (C0)}]
$\text{if  }B\subseteq A, \text{ then  }c^vB\subseteq c^vA,$
\item[{\rm (C1)}] $\text{if  }v\leq\bv_{i\in I}u_i, \text{ then   }\bigcap_{i\in I}c^{u_i}A\subseteq c^vA,
$\item[{\rm (C2)}] $A\subseteq c^{\sk}A,$
\item [{\rm (C3)}] $c^uc^vA\subseteq c^{v\otimes u}A,$
\end{enumerate}
for all $A, B\subseteq X$ and $ u,v,u_i \in V\; (i\in I)$.
Conversely, for any family of maps $c^v:\sP X\to\sP X\;(v\in \sV)$ satisfying the conditions {\rm (C0)--(C3)}, putting $$(cA)(x):=\bv\{v\in\sV\;|\;x\in c^vA\}\quad (A\subseteq X,\;x\in X)$$ makes $(X,c)$ a $\sV$-valued closure space,
and the two processes are inverse to each other. Under this bijection, when $\sV$ is completely distributive so that, in particular,  $\sV$ is generated by its coprime elements (see \cite{Gierz2003}),  $\sV$-valued topological structures are characterized by
\[c^p\emptyset=\emptyset \;{\rm and}\; c^p(A\cup B)=c^pA\cup c^pB\]
for all coprime elements $p$ in $\sV$ and $A,B\subseteq X$.

\item Next, $\sV$-valued closure spaces are equivalently presented as $({\mathbb P},\sV)$-{\em categories} in the sense of \cite{MonTop}, with the powerset monad $\mathbb P$ on $\bf Set$ laxly extended to the category $\sV$-{\bf Rel} of sets and $\sV$-valued relations by
\[\hat{\sP}r(A,B)=\bw_{y\in B}\bv_{x\in A}r(x,y),\]
for all $\sV$-relations $r:X\nrightarrow Y, \, A,B\subseteq X$. Furthermore, when $\sV$ is completely distributive, $\sV$-valued topological spaces are equivalently presented as $(\mathbb U,\sV)$-categories, with $\mathbb U$ denoting the ultrafilter monad on $\bf Set$, laxly extended to  $\sV$-{\bf Rel} by
\[\overline{\sU}r(\frx,\fry)=\bw_{A\in\frx,B\in\fry}\bv_{x\in A,y\in B}r(x,y),\]
for all $r:X\nrightarrow Y,\;\frx\in\sU X,\fry\in\sU Y:$ see \cite{LT} for details. Of course, these bijective correspondences pertain also to the relevant morphisms and therefore give isomorphisms of categories that commute with the underlying $\bf Set$ functors.
\end{enumerate}
\end{rem}

\section{Some known properties of spatial coframes and continuous lattices}

Since in the following section we will heavily rely on the property encountered in Lemma \ref{coprime gen} and Proposition \ref{addpres}, in this section we recall some well-known facts on lattices that are sup-generated by their coprime elements.

\begin{rem}\label{coframe prep} The following two statements are immediate consequences of the definition of coprimality:

  \begin{enumerate}
\item For every element $p$ in a poset $L$, the characteristic map $$\chi_p:
  L\to{\sf 2}=\{0<1\}$$ of the up-set $\mathord\uparrow p$, defined by ($\chi_p(x)=1\iff p\leq x$) for all $x\in L$, preserves all (existing) infima. The map $\chi_p$ preserves finite suprema if, and only if, $p$ is coprime.

\item Let $X$ be any subset of the poset $L$. Then every element in $L$ is a supremum of elements in $X$ if, and only if, the following condition holds for all $x,y\in L$:
$$\forall p\in X\,(p\leq x \Rightarrow p\leq y)\Longrightarrow x\leq y;$$
equivalently, $x\nleq y$ only if for some $p\in X$ one has $p\leq x,$ but $p\nleq y.$
\end{enumerate}
\end{rem}

\begin{prop}\label{coframe}
If a complete lattice $L$ is sup-generated by its coprime elements, then it is a coframe, that is: finite suprema distribute over arbitrary infima in $L$.
\end{prop}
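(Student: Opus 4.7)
My plan is to establish the non-trivial direction of the coframe law
\[x\vee\bw_{i\in I}y_i\;=\;\bw_{i\in I}(x\vee y_i)\]
for arbitrary $x\in L$ and families $(y_i)_{i\in I}$ in $L$, since the inequality $``\leq"$ is automatic in any complete lattice (each $x\vee\bw_{i}y_i\leq x\vee y_j$). Thus the content is the reverse inequality $\bw_{i}(x\vee y_i)\leq x\vee\bw_{i}y_i$.

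To prove it, I would exploit Remark~\ref{coframe prep}(2), applied to $X$ the set of coprime elements of $L$: it suffices to show that every coprime element $p\in L$ with $p\leq\bw_{i\in I}(x\vee y_i)$ satisfies $p\leq x\vee\bw_{i\in I}y_i$. Fix such a $p$. Then for every $i\in I$ we have $p\leq x\vee y_i$, and since $p$ is coprime this forces $p\leq x$ or $p\leq y_i$ for each $i$ individually. A case split finishes the job: if there exists some $i_0$ with $p\nleq y_{i_0}$, then $p\leq x$ and hence $p\leq x\vee\bw_i y_i$; otherwise $p\leq y_i$ for all $i$, so $p\leq\bw_i y_i\leq x\vee\bw_i y_i$. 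Either way $p\leq x\vee\bw_i y_i$, which is what we need.

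The argument is short and there is no real obstacle; the key conceptual point is simply the translation of ``sup-generated by coprimes'' into the order-testing principle of Remark~\ref{coframe prep}(2), so that distributivity inequalities can be verified pointwise on coprime test elements. This is a standard move in pointfree topology. Notably, the proof is constructive: it uses no choice principle and only the finite (binary) coprimality of $p$ against the join $x\vee y_i$, so it goes through verbatim under the hypotheses relevant for the rest of the paper.
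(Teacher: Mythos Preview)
Your argument is correct. The paper takes a more structural route: rather than verifying the coframe identity directly, it uses Remark~\ref{coframe prep}(1) to assemble the characteristic maps $\chi_p$ into a single map $\chi:L\to\prod_{p\;\text{coprime}}{\sf 2}$ that preserves arbitrary infima and finite suprema, and then invokes Remark~\ref{coframe prep}(2) to see that $\chi$ is injective, so that $L$ embeds as a sub-coframe of a power of the coframe ${\sf 2}$. Your element-testing verification is essentially what one obtains by unwinding that embedding at a single coprime $p$; it is more elementary and self-contained, while the paper's version immediately delivers the representation of $L$ as the closed-set lattice of a topological space, which feeds directly into Corollary~\ref{spatial}. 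One small caveat on your closing remark: the case split ``either some $i_0$ has $p\nleq y_{i_0}$, or $p\leq y_i$ for all $i$'' appeals to excluded middle, so while the proof is choice-free (which is all the paper actually requires), calling it \emph{constructive} in the intuitionistic sense is a slight overstatement.
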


\begin{proof}(See  Theorem I-3.15 of \cite{Gierz2003}; the fact that in \cite{Gierz2003} the bottom element is considered coprime has no bearing on the validity of the statement.) By Remark \ref{coframe prep}(1), the map
$$\chi: L\to\prod_{p\in L\;\text{coprime}}{\sf 2},\quad x\mapsto (\chi_p(x))_p,$$
preserves arbitrary infima and finite suprema. Furthermore, by Remark \ref{coframe prep}(2), it is an injective map. Consequently, $L$ is isomorphic to a subcoframe of a power of the coframe {\sf 2}.
\end{proof}

Rewriting the codomain of the map $\chi$ as the powerset of $X=\{p\in L\,|\,p \;\text{coprime}\}$, under the hypothesis of the Proposition we can re-interpret $L$ as the lattice of closed sets of a topology on $X$. In the language of (co)locale theory (see \cite{Johnstone1982}), this means precisely that $L$ is a {\em spatial coframe}. Explicitly then, let us re-state (the dual of) Exercise 1.5 in \cite{Johnstone1982}, as follows:

\begin{cor}\label{spatial}
A complete lattice is sup-generated by its coprime elements if, and only if, it is a spatial coframe.
\end{cor}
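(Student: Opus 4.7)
The plan is to handle the two implications separately, noting that much of the work for the forward direction has already been done in the proof of Proposition \ref{coframe}.

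For the forward implication, I would reuse the injective map $\chi: L \to \prod_{p\in L\text{ coprime}} \sf 2$ built in the proof of Proposition \ref{coframe}, identifying its codomain with the powerset $\sP X$ of the set $X = \{p \in L \mid p \text{ coprime}\}$ via $x \mapsto \{p \in X \mid p \leq x\}$. Since $\chi$ preserves arbitrary infima and finite suprema, its image is a family of subsets of $X$ closed under arbitrary intersections (including the empty intersection $\chi(\top) = X$) and finite unions (including $\chi(\bot) = \emptyset$, because every coprime $p$ satisfies $p > \bot$). Such a family is precisely the collection of closed sets of a topology on $X$. As $\chi$ is injective and preserves the relevant lattice operations, $L$ is isomorphic to that lattice of closed sets, hence to a spatial coframe.

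For the converse, I would start with a topological space $(Y,\tau)$ and take $L$ to be the lattice of its closed subsets, ordered by inclusion, with finite joins given by union and arbitrary meets given by intersection. For each $y \in Y$ I would verify that the closure $\overline{\{y\}}$ is coprime in $L$: if $\overline{\{y\}} \subseteq A \cup B$ for closed $A,B$, then $y$ lies in $A$ or in $B$, and since both are closed the corresponding inclusion of $\overline{\{y\}}$ follows; moreover $\overline{\{y\}} \neq \emptyset$. Then for any closed set $C$ one has $C = \bigcup_{y\in C} \overline{\{y\}}$, which is also the supremum in $L$ since $C$ is already closed. Thus every element of $L$ is a sup of coprime elements.

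The proof is really just a matter of packaging the material already available; the only slightly subtle point is to make sure that the forward argument genuinely delivers a topology on $X$ rather than merely an embedding into a product of copies of $\sf 2$, and that this is carried out without invoking choice. Both checks hinge only on the preservation properties of $\chi$ recorded in Proposition \ref{coframe}, so no obstacle of substance arises.
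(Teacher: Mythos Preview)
Your proposal is correct and matches the paper's approach essentially line for line: the forward direction is exactly the paper's observation (made just before the corollary) that the map $\chi$ of Proposition \ref{coframe} exhibits $L$ as the closed-set lattice of the set of coprimes, and your converse argument---showing that each $\overline{\{y\}}$ is coprime and that any closed $C$ is the join of such closures---is precisely what the paper sketches for the ``if'' part.
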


\begin{proof}
For the ``if" part, let us just note that a spatial coframe $L$ can be thought of as the set of closed sets of a topological space X. To see then that every $A\in L$ is the join of coprime elements, so that $$A=\overline{\bigcup\{P\in L\;|\;P\; \text{coprime}, P\subseteq A\}},$$ it suffices to note that for every $x\in A$, the set $\overline{\{x\}}$ is coprime in $L$.
\end{proof}

Since complete lattices that are sup-generated by their coprime elements have the distributivity property of a coframe, it is natural to ask when such lattices may be {\em completely distributive}; more precisely, since so far we were able to avoid any use of the Axiom of Choice, we would like to know when they are {\em constructively completely distributive (ccd)} (see \cite{Wood2004}).
 Recall that a complete lattice $L$ is ccd if every element $a\in L$ is the join of all elements $x\ll a$ (``$x$ totally below $a$"); here $x\ll a$ means that, whenever $a\leq \bv B$ for $B\subseteq L$, then $x\leq b$ for some $b\in B$. For ccd complete lattices to be completely distributive (cd) in the classical sense, one needs the Axiom of Choice (AC); in fact the validity of (AC) is equivalent to ((ccd)$\Leftrightarrow$ (cd)) holding for all complete lattices: see \cite{Wood2004}.

 To answer the question raised, recall that $L$ (which, in general, may just be a poset) is {\em continuous} if every element $a\in L$ is the directed join of all elements $x\prec\!\!\prec a$ (``$x$ way below $a$"); here $x\prec\!\!\prec a$ means that, whenever $a\leq \bv D$ with $D\subseteq L$ directed, then $x\leq d$ for some $d\in D$. Without reference to (AC) one may still state the following Proposition:

 \begin{prop}\label{ccd}
 If the complete lattice $L$ is continuous and sup-generated by its coprime elements, then $L$ is constructively completely distributive.
 \end{prop}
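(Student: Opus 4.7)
The plan is to show that every $a\in L$ satisfies $a=\bigvee\{x\in L : x\ll a\}$; since $\bigvee\{x: x\ll a\}\leq a$ is automatic, the task reduces to establishing the reverse inequality. I would separate the argument into a key lemma and a short assembly.

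The key lemma to prove first is: \emph{if $p\in L$ is coprime and $p\prec\!\!\prec a$, then $p\ll a$.} The point is that coprimality is exactly what bridges the gap between the directed covers appearing in the definition of $\prec\!\!\prec$ and the arbitrary covers appearing in the definition of $\ll$. Given $B\subseteq L$ with $a\leq\bigvee B$, I would form
$$D:=\{\bigvee F : F\subseteq B \text{ finite}\},$$
which is directed (the union of two finite subsets of $B$ is finite) and satisfies $\bigvee D=\bigvee B$. Hence $a\leq\bigvee D$, so by $p\prec\!\!\prec a$ there is a finite $F\subseteq B$ with $p\leq\bigvee F$. Since $p>\bot$, $F$ cannot be empty, and coprimality of $p$ then supplies some $b\in F$ with $p\leq b$; this proves $p\ll a$.

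With the lemma in hand, the assembly combines both hypotheses on $L$. Continuity yields $a=\bigvee\{y : y\prec\!\!\prec a\}$, and sup-generation by coprimes rewrites each such $y$ as $\bigvee\{p : p\text{ coprime},\,p\leq y\}$. Since $p\leq y\prec\!\!\prec a$ trivially forces $p\prec\!\!\prec a$ (any directed cover of $a$ already dominates $y$ pointwise), the lemma gives $p\ll a$ for every coprime $p$ appearing in this iterated join. Substituting,
$$a=\bigvee_{y\prec\!\!\prec a}\bigvee\{p : p\text{ coprime},\,p\leq y\}\leq\bigvee\{x\in L : x\ll a\}\leq a,$$
which establishes constructive complete distributivity. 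The only real obstacle is the key lemma; the trick of replacing an arbitrary subset by its directed family of finite joins is what makes coprimality interact correctly with the two ``below'' relations, and everything else is just bookkeeping.
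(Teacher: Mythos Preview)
Your proof is correct and follows essentially the same route as the paper: both arguments write $a$ as an iterated join of coprime elements lying below way-below approximants, and both use the trick of replacing an arbitrary $B$ by the directed family of finite sub-joins $\{\bigvee F: F\subseteq B\text{ finite}\}$ so that coprimality can extract a single $b\in F$. The only cosmetic difference is that you isolate ``coprime $+$ way-below $\Rightarrow$ totally-below'' as a separate lemma and pass through the observation $p\leq y\prec\!\!\prec a\Rightarrow p\prec\!\!\prec a$, whereas the paper argues directly that any coprime $p\leq x\prec\!\!\prec a$ is totally below $a$ without naming the intermediate step.
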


 \begin{proof}
 Every $a\in L$ is the (directed) join of all $x\prec\!\!\prec a$, with each $x$ being the join of all coprime elements $p\leq x$; hence, $a=\bv\{p\in L\,|\, p\;\text{coprime}, \exists x\,(p\leq x\prec\!\!\prec a)\}$.
 It suffices to note now that each such $p$ is totally below $a$. Indeed, if $a\leq \bv B$, since $\bv B=\bv\{\bv F\,|\,F\subseteq B \;\text{finite}\}$ is a directed join, one first has $x\leq\bv F$ for some finite $F\subseteq B$, and then $p\leq b$ for some $b\in F$, by coprimality of $p$.
 \end{proof}

 It is well known that, with (AC) now granted, the sufficient condition for (ccd) of Proposition \ref{ccd} is also necessary:

 \begin{thm}
 Under the Axiom of Choice, a complete lattice is completely distributive if, and only if, it is a continuous spatial coframe.
 \end{thm}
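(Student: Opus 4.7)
For the $(\Leftarrow)$ direction, I would observe that this is essentially already in hand from earlier results: by Corollary \ref{spatial}, a spatial coframe is sup-generated by its coprime elements, so Proposition \ref{ccd} yields that a continuous spatial coframe is (ccd); under the Axiom of Choice, (ccd) coincides with (cd), as recalled just before Proposition \ref{ccd}. So the interesting direction is $(\Rightarrow)$.

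For $(\Rightarrow)$, suppose $L$ is completely distributive. The plan is to verify separately that $L$ is continuous and that $L$ is a spatial coframe. For continuity, I would invoke the standard fact that in any (cd) lattice every element $a$ satisfies $a=\bigvee\{x : x\ll a\}$; since the totally-below relation is contained in the way-below relation, this gives $a=\bigvee\{x : x\prec\!\!\prec a\}$, and the way-below set of $a$ is always directed in a complete lattice (if $x_1, x_2\prec\!\!\prec a$, then $x_1\vee x_2\prec\!\!\prec a$, via directedness of any $D$ appearing in the test condition), so $L$ is continuous.

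For the spatial coframe claim, Corollary \ref{spatial} reduces the task to showing that every element of $L$ is a join of coprime elements. The plan here is to invoke Raney's representation theorem, which (using (AC)) embeds $L$ as a subset of some power set $2^X$ closed under arbitrary unions and intersections. For each $x\in X$, one then checks directly that $p_x := \bigcap\{A\in L : x\in A\}$ is completely join-prime in $L$, and every $A\in L$ equals the union of those $p_x$ with $x\in A$. Since a nonzero completely join-prime element is in particular coprime in the finite-join sense of the paper, this yields the required coprime sup-generation, and Corollary \ref{spatial} then identifies $L$ as a spatial coframe.

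I expect the main obstacle to lie in this last step: the Axiom of Choice enters essentially through Raney's theorem (or, equivalently, through a Zorn's lemma argument constructing, for each $a\in L$ and each $x\ll a$ with $x>\bot$, a maximal element $p\geq x$ avoiding a suitable proper down-set of $a$, which is then shown to be coprime). By contrast, the continuity part is essentially formal and uses only the standard properties of $\ll$ in (cd) lattices that were already deployed in Proposition \ref{ccd}.
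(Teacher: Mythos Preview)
Your proposal is correct. For the $(\Leftarrow)$ direction it coincides exactly with the paper's argument: Corollary \ref{spatial} plus Proposition \ref{ccd}, followed by the AC-dependent passage from (ccd) to (cd).

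For the $(\Rightarrow)$ direction the paper does not give a self-contained proof at all; it simply refers the reader to Theorem I-3.16 of \cite{Gierz2003}. Your sketch therefore supplies strictly more than the paper does. You separate the two target properties: continuity is obtained formally from $a=\bigvee\{x:x\ll a\}$ together with the inclusion $\ll\,\subseteq\,\prec\!\!\prec$ and the (always valid) directedness of $\{x:x\prec\!\!\prec a\}$; coprime sup-generation is obtained via Raney's representation of a completely distributive lattice as a complete ring of sets, with the elements $p_x=\bigcap\{A\in L:x\in A\}$ serving as completely join-prime generators. This is a standard and correct route, close in spirit to the argument in \cite{Gierz2003}. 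Its merit over a bare citation is that it makes visible exactly where AC is consumed---in Raney's theorem, or equivalently in the Zorn-type extraction of coprimes you mention---which dovetails with the paper's closing remark that a constructive converse to Proposition \ref{ccd} is not known.
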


 \begin{proof}
 For the part of the proof not yet covered by Proposition \ref{ccd} and Corollary \ref{spatial}, we refer to Theorem I-3.16 in \cite{Gierz2003}.
 \end{proof}

 We do not know whether there is a ``constructive version" of this theorem, that is: whether one can prove without invoking AC the converse statement of Proposition \ref{ccd}, so that a complete ccd lattice is a continuous spatial coframe.

\section{$\sV\text{-}{\bf Top}$ as a topological category}

From the presentation \ref{VTopasPVCat}(3) we know that the forgetful functor $|\text{-}|: \sV\text{-}{\bf Cls}\to {\bf Set}$ is topological (see \cite{MonTop}), a fact that may easily be checked also directly, as follows.

\begin{lem} For a family (of any size) of maps $f_i: X\to Y_i$ from a set $X$ into $\sV$-valued closure spaces $(Y_i,d_i),\,(i\in I)$, the $|\text{-}|$-initial structure $c$ on $X$ is given by
$$ (cA)(x)=\bw_{i\in I}d_i(f_iA)(f_ix)$$
for all $x\in X,\, A\subseteq X$.
\end{lem}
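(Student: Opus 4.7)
The plan is to verify that the prescribed $c:\sP X\to\sV^X$ genuinely defines a $\sV$-valued closure structure on $X$ satisfying the initial-lift universal property for the forgetful functor $|\text{-}|$. Concretely, I would proceed in four short steps: (i) verify axiom (R) for $c$, (ii) verify axiom (T) for $c$, (iii) observe that each $f_i:(X,c)\to (Y_i,d_i)$ is continuous, and (iv) check the initiality universal property against an arbitrary competing cone.

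For (i), if $x\in A\subseteq X$, then $f_ix\in f_iA$ for each $i\in I$, so by (R) for $d_i$ we have $\sk\leq d_i(f_iA)(f_ix)$; taking the meet over $i$ yields $\sk\leq (cA)(x)$. For (ii), fix $A,B\subseteq X$ and $x\in X$. The key observation is that $\{d_i(f_iA)(f_iy):y\in B\}=\{d_i(f_iA)(z):z\in f_iB\}$ as unordered collections of values, so
\[\bw_{y\in B}(cA)(y)\;\leq\;\bw_{y\in B}d_i(f_iA)(f_iy)\;=\;\bw_{z\in f_iB}d_i(f_iA)(z)\]
for every $i$. Multiplying on the right by $(cB)(x)\leq d_i(f_iB)(f_ix)$, using that $\otimes$ preserves order in each variable, and then applying (T) for $d_i$, gives
\[\Big(\bw_{y\in B}(cA)(y)\Big)\otimes (cB)(x)\;\leq\;\Big(\bw_{z\in f_iB}d_i(f_iA)(z)\Big)\otimes d_i(f_iB)(f_ix)\;\leq\;d_i(f_iA)(f_ix).\]
Taking the infimum over $i$ on the right yields (T) for $c$.

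Step (iii) is immediate: the formula for $(cA)(x)$ is an infimum over $I$, which is bounded above by its $i$-th term $d_i(f_iA)(f_ix)$, so each $f_i$ satisfies (C). For (iv), suppose $(Z,e)$ is a $\sV$-valued closure space and $g:Z\to X$ is a map with each composite $f_i\circ g:(Z,e)\to (Y_i,d_i)$ continuous. Then for every $C\subseteq Z$ and $z\in Z$ and every $i\in I$,
\[(eC)(z)\;\leq\;d_i\big((f_ig)C\big)(f_igz)\;=\;d_i\big(f_i(gC)\big)\big(f_i(gz)\big),\]
so taking the meet over $i$ yields $(eC)(z)\leq c(gC)(gz)$, proving continuity of $g:(Z,e)\to(X,c)$.

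I do not expect any serious obstacle: the only point requiring a moment of care is identifying the two indexed meets in step (ii) (replacing the index set $B$ by $f_iB$), which works because the multiset of values is the same and only the underlying set of values matters for an infimum. Everything else follows from the pointwise definition of $c$ together with the fact that $\otimes$ preserves suprema (hence order) in each variable.
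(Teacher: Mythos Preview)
Your proof is correct and follows essentially the same route as the paper's: the verification of (T) via monotonicity of $\otimes$ and (T) for each $d_i$ is identical in substance, with your re-indexing $\bw_{y\in B}d_i(f_iA)(f_iy)=\bw_{z\in f_iB}d_i(f_iA)(z)$ made explicit where the paper leaves it implicit. You are in fact more complete than the paper, which checks only (R) and (T) and omits the (routine) continuity and initiality verifications of your steps (iii) and (iv).
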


\begin{proof} (R) holds trivially, and for (T) we note that, for all $x\in X$ and $A,B\subseteq X$,
\begin{align*}
(\bw_{y\in B}(cA)(y))\otimes (cB)(x) &=(\bw_{y\in B}\bw_{i\in I}d_i(f_iA)(f_iy))\otimes\bw_{i\in I}d_i(f_iB)(f_ix) \\
&\leq\bw_{i\in I}((\bw_{y\in B}d_i(f_iA)(f_iy))\otimes d_i(f_iB)(f_ix))  \\
&\leq\bw_{i\in I}d_i(f_iA)(f_ix)=(cA)(x).
\tag*{\qEd}
\end{align*}
\def\popQED{}
\end{proof}
In order for us to conclude that the full subcategory $\sV\text{-}{\bf Top}$ of $\sV\text{-}{\bf Cls}$ is topological over $\bf Set$ as well, it suffices to show that it is {\em bicoreflective} (=coreflective, with all coreflections being {\em bimorphisms, i.e.}, both epic and monic) in $\sV\text{-}{\bf Cls}$. To this end,
for a subset $A$ of $X$, let ${\rm FinCov}(A)$ denote the set of finite covers of $A$, {\em i.e.}, of strings $(M_1,...,M_m)$ with $M_1\cup...\cup M_m=A$; here $m=0$ (the empty string $\emptyset$) is permitted when $A=\emptyset$. With the usual ``finer" relation
\[(M_1,...,M_m)\leq(N_1,...,N_n)\iff\forall i\in\{1,...,m\}\exists j\in\{1,...,n\}\;(M_i\subseteq N_j),\]
${\rm FinCov}(A)$ becomes a down-directed preordered set, {\em i.e.}, a preordered set in which finite sets have lower bounds.
For a $\sV\text{-}$valued closure space structure $c$ on $X$ and $\overrightarrow{M}=(M_1,...,M_m)\in {\rm FinCov}(A),\;x\in X$, let us write $$(c\overrightarrow{M})(x):=(cM_1)(x)\vee...\vee(cM_m)(x)$$ and define the {\em finitely additive core} $c^+$ of $c$ by
$$(c^+A)(x)=\bw_{\overrightarrow{M}\in{\rm FinCov}(A)}(c\overrightarrow{M})(x).$$

\begin{thm}\label{theorem}
Let the quantale $\sV$ be sup-generated by its coprime elements.
Then $(X,c)\mapsto (X,c^+)$ describes a right adjoint functor of $\sV\text{-}{\bf Top}\hookrightarrow\sV\text{-}{\bf Cls}$ which commutes with the underlying {\bf Set}-functors and has its counits mapping identically.
\end{thm}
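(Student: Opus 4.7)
The plan is to verify that $c^+$ defines a $\sV$-valued topological structure on $X$ dominated by $c$, that $\id_X : (X, c^+) \to (X, c)$ is continuous and universal among continuous maps into $(X,c)$ from $\sV$-valued topological spaces, and then to read off the coreflector. The pivotal observation will be that $c^+$ is in fact the \emph{largest} pretopological structure on $X$ below $c$, from which transitivity (T) for $c^+$ follows cleanly from Lemma~\ref{cc} and Proposition~\ref{addpres}.

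First I would verify that $(X, c^+)$ is pretopological. Property (R) is immediate: if $x \in A$, then every finite cover of $A$ has some member $M_i$ with $x \in M_i$, so $(c\overrightarrow{M})(x) \geq (cM_i)(x) \geq \sk$, and the infimum inherits this bound. For (A), $(c^+\emptyset)(x) = \bot$ is witnessed by the empty string in ${\rm FinCov}(\emptyset)$. For $c^+(A \cup B) = c^+A \vee c^+B$, the inequality ``$\geq$'' comes from restricting a cover $\overrightarrow{N}$ of $A \cup B$ to the traces $(A \cap N_i)$ and $(B \cap N_i)$ and invoking monotonicity of $c$ on non-empty sets, while ``$\leq$'' comes from concatenating a cover of $A$ with a cover of $B$ and then pushing the binary join inside the infimum over pairs through the coframe law of Proposition~\ref{coframe}; this latter step is the place where sup-generation of $\sV$ by coprime elements is strictly required. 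As a consequence, $c^+$ is monotone.

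Next I would record the maximality property: whenever $d : \sP X \to \sV^X$ is pretopological and $d \leq c$, then $d \leq c^+$; indeed, additivity of $d$ yields $(dA)(x) = \bigvee_i (dM_i)(x) \leq \bigvee_i (cM_i)(x) = (c\overrightarrow{M})(x)$ for every finite cover $\overrightarrow{M}$ of $A$, and one takes the infimum. To derive (T) for $c^+$, note that $\overline{c^+}$ is pretopological by Proposition~\ref{addpres} (which uses the coprime hypothesis again), and $\overline{c^+} \leq \overline{c} \leq c$ by Lemma~\ref{cc} combined with (T) for $c$; maximality then forces $\overline{c^+} \leq c^+$, which by Lemma~\ref{cc} is precisely (T). This indirect route is the technical heart of the argument, and also the step I expect to be the main obstacle if one tries to verify (T) directly from the definition of $c^+$.

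The inequality $c^+ \leq c$ (witnessed by the one-block cover $(A)$) makes $\id_X : (X, c^+) \to (X, c)$ continuous. For the universal property, given a continuous $f : (Y,d) \to (X,c)$ with $(Y,d)$ a $\sV$-valued topological space, continuity of $f$ into $(X, c^+)$ is shown as follows: for any $A \subseteq Y$, $y \in Y$ and any cover $\overrightarrow{M} = (M_1, \ldots, M_m)$ of $fA$, the decomposition $A = \bigcup_i (A \cap f^{-1}M_i)$ together with additivity of $d$, continuity of $f$ into $(X,c)$, and monotonicity gives
\[(dA)(y) = \bv_i (d(A \cap f^{-1}M_i))(y) \leq \bv_i (cM_i)(fy) = (c\overrightarrow{M})(fy),\]
and taking the infimum over $\overrightarrow{M}$ yields $(dA)(y) \leq (c^+(fA))(fy)$. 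Functoriality of $(X,c) \mapsto (X, c^+)$ and the adjunction with identically-acting counits are then routine.
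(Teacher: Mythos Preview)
Your proof is correct and takes essentially the same route as the paper's: both show $c^+$ is pretopological (using the coframe law of Proposition~\ref{coframe} for the ``$\leq$'' half of additivity), derive (T) indirectly by noting that $\overline{c^+}$ is pretopological with $\overline{c^+}\leq\overline{c}\leq c$ and hence $\overline{c^+}\leq c^+$ via Lemma~\ref{cc} and Proposition~\ref{addpres}, and verify the universal property via the preimage-cover argument. The only cosmetic difference is that you isolate the maximality of $c^+$ among pretopological structures below $c$ as a named observation, whereas the paper invokes it inline with a forward reference to the universal-property computation.
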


\begin{proof}
Trivially, for $x\in X,\,A\subseteq X$, since $(A)\in{\rm FinCov}(A)$, one has $(c^+A)(x)\leq (cA)(x)$. Also, for all $\overrightarrow{M}=(M_1,...,M_m)\in {\rm FinCov}(A)$, $x\in A$ implies $x\in M_i$ for some $i$ and, hence, ${\sf k}\leq (cM_i)(x)\leq (c\overrightarrow{M})(x)$, which shows that $c^+$ is reflexive: ${\sf k}\leq (c^+A)(x)$.

To verify that $c^+$ is finitely additive, first note that, since the empty string covers the empty set, we have $\bot=(c\overrightarrow{\emptyset})(x)\geq (c^+\emptyset)(x)$ for all $x\in X$. Furthermore, for $A,B\subseteq X$ and $(M_1,..,M_m)\in{\rm FinCov}(A),
(N_1,...,N_n)\in{\rm FinCov(B)}$, one has $(M_1,...,M_m,N_1,...,N_n)\in{\rm FinCov}(A\cup B)$. Consequently, since $\sV$ is a coframe by Proposition \ref{coframe}, with the repeated application of the corresponding distributivity property (but exploited only for down-directed infima), one obtains
\begin{align*}
c^+(A\cup B)(x)&\leq\bw_{\overrightarrow{M}}\bw_{\overrightarrow{N}}((c\overrightarrow{M})(x)\vee(c\overrightarrow{N})(x))\\
&=\bw_{\overrightarrow{M}}((c\overrightarrow{M})(x)\vee(\bw_{\overrightarrow{N}}(c\overrightarrow{N})(x)))\\
&=\big(\bw_{\overrightarrow{M}}(c\overrightarrow{M})(x)\big)\vee(c^+B)(x)\\
&=(c^+A)(x)\vee(c^+B)(x).
\end{align*}
The reverse inequality holds since $c^+$ obviously inherits the monotonicity from $c$ (even when $\sV$ is not integral).

The crucial argument of the proof, namely the verification of (T) for $c^+$, can be given very compactly, as follows. From $c^+\leq c$ we conclude $d:=\overline{c^+}\leq \overline{c}\leq c$ with Lemma \ref{cc}.
Since, by Proposition \ref{addpres}, $d$ is finitely additive, $d\leq c$ trivially implies $d\leq c^+$ (as outlined more generally just below). This, by Lemma \ref{cc} again, means that $c^+$ is transitive.

Finally, to verify the adjunction, we must show that, for $(Y,d)\in\sV\text{-}{\bf Top}$, every continuous map $g:(Y,d)\to(X,c)$ is also continuous as a map $(Y,d)\to(X,c^+)$. But for $C\subseteq Y$ and all $(M_1,...,M_m)\in{\rm FinCov}(gC)$ one trivially has $(g^{-1}M_1\cap C,...,g^{-1}M_m\cap C)\in{\rm FinCov}(C)$ and therefore, by the finite additivity of $d$, for every $y\in Y$,
\begin{align*}
(dC)(y)&\leq d(g^{-1}M_1)(y)\vee...\vee d(g^{-1}M_m)(y)\\
&\leq c(gg^{-1}M_1)(gy)\vee...\vee c(gg^{-1}M_m)(gy)\\
&=(cM_1)(gy)\vee...\vee (cM_m)(gy).
\end{align*}
Consequently, $(dC)(y)\leq c^+(gC)(gy)$ follows, as desired.
\end{proof}

\begin{cor}
Let the quantale $\sV$ be a spatial coframe. Then the forgetful functor
$\sV\text{-}{\bf Top}\to{\bf Set}$ is topological, with initial liftings to be formed by coreflecting the initial lifting with respect to $\sV\text{-}{\bf Cls}\to{\bf Set}$ into $\sV\text{-}{\bf Top}$. Accordingly, limits in $\sV\text{-}{\bf Top}$ are formed by applying the coreflector to the corresponding limits formed in
$\sV\text{-}{\bf Cls}.$
 \end{cor}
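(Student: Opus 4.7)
The plan is to bundle two facts already in hand. First, by Corollary \ref{spatial}, the hypothesis that $\sV$ is a spatial coframe is equivalent to $\sV$ being sup-generated by its coprime elements, so Theorem \ref{theorem} applies and supplies a right adjoint $(-)^+:\sV\text{-}{\bf Cls}\to\sV\text{-}{\bf Top}$ to the full inclusion, with counits given by identity maps on underlying sets. Second, the forgetful functor $|\text{-}|:\sV\text{-}{\bf Cls}\to{\bf Set}$ is known to be topological (see the Lemma and Remark \ref{VTopasPVCat}(3)).

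Next I would observe that since each counit $(X,c^+)\to(X,c)$ is the identity on the underlying set, it is automatically both a monomorphism and an epimorphism in $\sV\text{-}{\bf Cls}$, so $\sV\text{-}{\bf Top}$ lies bicoreflectively in $\sV\text{-}{\bf Cls}$. I would then invoke the standard stability principle that a bicoreflective full subcategory of a topological category over $\bf Set$ is itself topological over $\bf Set$, with initial liftings formed by applying the coreflector to the initial liftings in the ambient category. Concretely, given a source $f_i:X\to(Y_i,d_i)$ in $\bf Set$ with $(Y_i,d_i)\in\sV\text{-}{\bf Top}$, the initial lift in $\sV\text{-}{\bf Top}$ is $(X,c^+)$, where $c$ is the $\sV\text{-}{\bf Cls}$-initial structure $(cA)(x)=\bw_{i\in I}d_i(f_iA)(f_ix)$ described in the Lemma.

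For the last assertion, I would note that in a topological category limits are carried by initial lifts of the underlying-set limit cones: since the limit cone of the underlying diagram in $\sV\text{-}{\bf Cls}$ is such an initial lift, applying the coreflector $(-)^+$ produces the limit cone in $\sV\text{-}{\bf Top}$.

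There is no real obstacle here; all the substantive work has been done in Theorem \ref{theorem}. The only point requiring a moment's reflection is that the coreflection counits are bimorphisms, but this is immediate because they are identities on underlying sets, after which the conclusion follows by abstract categorical machinery.
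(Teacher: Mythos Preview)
Your proposal is correct and follows exactly the approach the paper intends: the paper states this Corollary without proof because the paragraph preceding Theorem~\ref{theorem} already explains that it suffices to exhibit $\sV\text{-}{\bf Top}$ as bicoreflective in the topological category $\sV\text{-}{\bf Cls}$, and Theorem~\ref{theorem} (together with Corollary~\ref{spatial}) supplies precisely that. You have simply spelled out the assembly of these pieces, including the observation that identity-carried counits are bimorphisms, which is the content the paper leaves to the reader.
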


 \begin{rem}\label{discrete}
 As topological functors, the underlying $\bf Set$-functors of $\sV\text{-}{\bf Cls}$ and $\sV\text{-}{\bf Top}$ have both, a full and faithful left adjoint and a full and faithful right adjoint, given by the discrete and indiscrete structures, respectively. But these are available without any extra provisions on $\sV$. The discrete $\sV$-valued closure structure on a set $X$ is given by the map
 \[c_{\rm{disc}}: \sP X\to \sV^X,\; (c_{\rm{disc}} A)(x)={\sf k} \;\;\text{if}\; x\in A, \text{ and }\; \bot\; \text{otherwise},  \]
 which is already finitely additive. The indiscrete $\sV$-valued closure space structure $c_{\rm{indisc}}$ maps $A\subseteq X$ to the constant function with value $\top$, but has to be corrected in case $A=\emptyset$ to the constant function $\bot$ in order to give the indiscrete $\sV$-valued topological structure on $X$.
  \end{rem}

\section{A $\sV$-categorical presentation of $\sV$-valued closure and topological spaces}

Recall that a (small) $\sV$-{\em category} is given by a set $X$ of objects and a ``hom map" $\text{hom}_X=a:X\times X\to\sV$ satisfying the conditions ${\sf k}\leq a(x,x)$ and $a(y,z)\otimes a(x,y)\leq a(x,z)$ for all $x,y,z\in X$.
A $\sV$-{\em functor} $f:(X,a)\to (Y,b)$ is a map $f:X\to Y$ with $a(x,y)\leq b(fx,fy)$ for all $x,y\in X$. The resulting category is $\sV\text{-}{\bf Cat}$, which is topological over $\bf Set$. There are three particular $\sV$-categories that will be used in what follows:

\begin{enumerate}
\item $\sV$ itself becomes a $\sV$-category with its hom-map $[-,-]$, characterized by
\[u\leq[v,w] \iff u\otimes v\leq w\]
for all $u,v,w\in\sV$, so that every $[v,-]$ is right adjoint to $(-)\otimes v:\sV\to\sV$ (as monotone maps).

\item For every set $X$, $\sV^X$ is the $X$-th power of $\sV$ in $\sV\text{-}{\bf Cat}$ when provided with the $\sV$-category structure
\[ [\sigma,\tau]=\bw_{x\in X}[\sigma x, \tau x]\;\;\;(\sigma ,\tau\in\sV^X).\]
We note that, of course, $X\mapsto \sV^X$ gives a functor ${\bf Set}^{\rm op}\to\sV\text{-}{\bf Cat}$ when one assigns to a map $f:X\to Y$ the $\sV$-functor \[f^!:\sV^Y\to\sV^X,\;\sigma\mapsto \sigma f.\]

\item For every set $X$, the power set $\sP X$ becomes a $\sV$-category when provided with the initial structure induced by the map $c_{\rm{disc}}:\sP X\to\sV^X$ of Remark \ref{discrete} (with respect to the underlying $\bf Set$-functor of $\sV\text{-}{\bf Cat}$), so that
\[\text{hom}_{\sP X}(A,B)=[c_{\rm{disc}} A,c_{\rm{disc}} B]\;\;\;(A,B\subseteq X).\]
\end{enumerate}

\begin{prop}\label{equivalent}
The following conditions are equivalent for a map $c:\sP X\to\sV^X$:
\begin{enumerate}[leftmargin=8mm]
\item[{\rm(i)}] $c$ is $\sV$-closure space structure on $X$;
\item[{\rm(ii)}] $c_{\rm{disc}} A\leq cA$ and $[c_{\rm{disc}} B,cA]\leq [cB,cA]$, \;for all $A,B\subseteq X$;
\item[{\rm(iii)}] $[c_{\rm{disc}} B, cA]=[cB, cA]$,\; for all $A,B\subseteq X$.
\end{enumerate}
\end{prop}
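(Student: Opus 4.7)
My plan is to reduce everything to pointwise computations in $\sV$ using the adjunction $u\leq[v,w]\iff u\otimes v\leq w$ and the explicit formula $(c_{\rm{disc}} B)(x)=\sk$ when $x\in B$, else $\bot$.

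The key preliminary calculation is to observe that, for any $A,B\subseteq X$,
\[[c_{\rm{disc}} B,cA]=\bw_{x\in X}[(c_{\rm{disc}} B)(x),(cA)(x)]=\bw_{y\in B}(cA)(y),\]
since $[\sk,w]=w$ and $[\bot,w]=\top$. This identity is the bridge between the $\sV$-categorical formulation and the closure-space axioms.

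Next, I would show (i)$\iff$(ii) by matching the two conditions of (ii) with (R) and (T) respectively. The inequality $c_{\rm{disc}} A\leq cA$ is equivalent pointwise to $(\sk\leq (cA)(x)$ whenever $x\in A)$, which is exactly (R); the case $x\notin A$ being vacuous since $(c_{\rm{disc}} A)(x)=\bot$. For the second inequality, applying the adjunction pointwise gives
\[[c_{\rm{disc}} B,cA]\leq[cB,cA]\iff\forall x\in X:\;\bigl(\bw_{y\in B}(cA)(y)\bigr)\otimes(cB)(x)\leq(cA)(x),\]
which is precisely (T).

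For (ii)$\iff$(iii), the implication (iii)$\Rightarrow$(ii) has two parts. The inequality half is immediate from equality. To recover $c_{\rm{disc}} A\leq cA$, I would take $B:=A$ in (iii); then $\sk\leq[cA,cA]$ (which holds since $\sk\otimes(cA)(x)\leq(cA)(x)$ for every $x$) combined with $[c_{\rm{disc}} A,cA]=[cA,cA]$ yields $\sk\leq[c_{\rm{disc}} A,cA]$, i.e.\ $\sk\otimes(c_{\rm{disc}} A)(x)\leq(cA)(x)$ for all $x$, which is exactly $c_{\rm{disc}} A\leq cA$. For (ii)$\Rightarrow$(iii), I use the order-reversing nature of $[-,cA]$: from (R) we have $c_{\rm{disc}} B\leq cB$, hence $[cB,cA]\leq[c_{\rm{disc}} B,cA]$, which combined with the (ii) inequality yields the equality (iii).

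I do not expect any real obstacle here; the whole argument is a matter of carefully unpacking the internal hom on $\sV^X$ and applying the quantale adjunction. The only point that requires a moment's thought is recovering (R) from (iii), where the trick of specialising $B=A$ and invoking $\sk\leq[cA,cA]$ is essential.
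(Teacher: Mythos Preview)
Your proof is correct and essentially identical to the paper's: both use the identity $[c_{\rm disc} B,cA]=\bigwedge_{y\in B}(cA)(y)$ (via $[\sk,w]=w$ and $[\bot,w]=\top$) to match (R) and (T) with the two clauses of (ii), invoke the order-reversal of $[-,cA]$ for (ii)$\Rightarrow$(iii), and specialize $B=A$ together with $\sk\leq[cA,cA]$ to recover $c_{\rm disc} A\leq cA$ from (iii). The only difference is organizational---you isolate the bridge identity upfront, whereas the paper embeds it inside the (i)$\Leftrightarrow$(ii) step.
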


\begin{proof} \leavevmode
  \begin{description}[before={\renewcommand\makelabel[1]{##1:}}]
\item[(i)$\Leftrightarrow$(ii)] By definition of $c_{\rm{disc}} A$, (R) is equivalent to $(c_{\rm{disc}} A)(x)\leq (cA)(x)$ for all  $x\in X$. (T) may be written equivalently as
\[ \bw_{y\in B}(cA)(y)\leq\bw_{x\in X}[(cB)(x),(cA)(x)].\]
Since $[\bot,w]=\top$ and $[{\sf k},w]=w$ for all $w\in \sV$, the left-hand side of this inequality equals $\bw_{x\in X}[(c_{\rm{disc}} B)(x),(cA)(x)]$, so that (T) then becomes equivalent to the second inequality of (ii).

\item[(ii)$\Rightarrow$(iii)] $c_{\rm{disc}} B\leq cB$ gives $[c_{\rm{disc}} B, cA]\geq[cB, cA]$, since $[-,w]$ reverses the order for all $w\in \sV$.

\item[(iii)$\Rightarrow$(ii)] Since $(\sk\leq[v,w]\iff v\leq w)$ for all $v, w\in \sV$, from $[c_{\rm{disc}} A, cA]=[cA, cA]\geq{\sf k}$ one obtains $c_{\rm{disc}} A\leq cA$.
  \qedhere
\end{description}
\end{proof}

\begin{cor}\label{Vfunctor}
Every $\sV$-valued closure space structure $c$ on a set $X$ gives a $\sV$-functor $c:\sP X\to\sV^X$.
\end{cor}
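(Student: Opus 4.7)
The plan is to unwind the definition of ``$\sV$-functor'' in the language of the three particular $\sV$-categories introduced just before the corollary, and then to reduce everything to Proposition \ref{equivalent}. Recall that the power $\sV^X$ carries the pointwise structure $[\sigma,\tau]=\bw_{x\in X}[\sigma x,\tau x]$, and $\sP X$ is given the initial structure along $c_{\rm{disc}}:\sP X\to\sV^X$, so that $\hom_{\sP X}(A,B)=[c_{\rm{disc}} A,c_{\rm{disc}} B]$ for all $A,B\subseteq X$. Thus the statement ``$c:\sP X\to\sV^X$ is a $\sV$-functor'' reduces to the inequality
\[
[c_{\rm{disc}} A,c_{\rm{disc}} B]\leq [cA,cB]
\qquad\text{for all }A,B\subseteq X.
\]

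To verify this, I would first invoke the implication (i)$\Rightarrow$(ii) of Proposition \ref{equivalent}: the reflexivity condition (R) on $c$ gives $c_{\rm{disc}} B\leq cB$. Since $[u,-]$ is monotone (it is right adjoint to $(-)\otimes u$), applying it to this inequality with $u=c_{\rm{disc}} A$ yields
\[
[c_{\rm{disc}} A,c_{\rm{disc}} B]\leq [c_{\rm{disc}} A,cB].
\]
Next, I would apply the equality of Proposition \ref{equivalent}(iii), with the roles of $A$ and $B$ swapped, to rewrite the right-hand side as $[cA,cB]$. Stringing the two together gives the desired inequality.

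There is no genuine obstacle here: the entire argument is a short deduction from Proposition \ref{equivalent}, whose nontrivial content (the equivalence of (i) and (iii)) has already been established. The only thing to be careful about is to apply (iii) in the correct variance, namely to the second (covariant) slot of the internal hom rather than to the first.
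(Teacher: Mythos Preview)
Your proof is correct and follows essentially the same route as the paper's: both combine the inequality $c_{\rm disc}\leq c$ with the monotonicity of $[u,-]$ and the second part of Proposition~\ref{equivalent} to obtain the chain $[c_{\rm disc} A,c_{\rm disc} B]\leq[c_{\rm disc} A,cB]\leq[cA,cB]$. The only cosmetic difference is that the paper invokes the inequality in (ii) for the second step whereas you cite the equality in (iii), and the paper's labeling of $A$ and $B$ is swapped relative to yours.
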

\begin{proof}
Since $[v,-]$ is monotone for all $v\in\sV$, Proposition \ref{equivalent} gives $[c_{\rm{disc}} B, c_{\rm{disc}} A]\leq[c_{\rm{disc}} B,cA]\leq [cB,cA]$, for all $A,B\subseteq X$.
\end{proof}

For every $\sV$-category $X=(X,a)$, one has the {\em Yoneda $\sV$-functor}
\[{\sf y}_X:X\to\sV^X,\; x\mapsto {\rm hom}_X(-,x)=a(-,x).\]
We can now state:
\begin{cor}\label{composite}
For a set $X$, a map $c:\sP X\to \sV^X$ is a $\sV$-valued closure space structure on $X$ if, and only if,
\vspace{-2mm}
\[( \sP X\to^c\sV^X\to^{{\sf y}_{\sV^X}}\sV^{\sV^X}\to^{c_{\rm{disc}}^!}\sV^{\sP X})=(\sP X\to^c\sV^X\to^{{\sf y}_{\sV^X}}\sV^{\sV^X}\to^{c^!}\sV^{\sP X})\]
in $\sV\text{-}{\bf Cat}$ or, equivalently, in $\bf Set$. The map $c$ makes $X$ a $\sV$-valued topological space if, and only if, it
preserves finite suprema.
\end{cor}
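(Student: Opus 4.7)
My plan is to unfold both sides of the claimed equation pointwise and recognize the resulting identity as condition (iii) of Proposition~\ref{equivalent}. The key is that the Yoneda $\sV$-functor at $\sV^X$ sends $\sigma\in\sV^X$ to the representable $[-,\sigma]:\sV^X\to\sV$, while for any map $f:Y\to Z$ the $\sV$-functor $f^!$ is precomposition with $f$. Therefore, evaluating each composite on $A\in\sP X$ and then on $B\in\sP X$, I would compute
\[
c_{\rm{disc}}^!\bigl(\sy_{\sV^X}(cA)\bigr)(B)=[c_{\rm{disc}} B,\,cA],\qquad c^!\bigl(\sy_{\sV^X}(cA)\bigr)(B)=[cB,\,cA].
\]
Equality of these two set-maps $\sP X\to\sV^{\sP X}$ is therefore literally the identity
\[
[c_{\rm{disc}} B,\,cA]=[cB,\,cA]\quad(\forall A,B\subseteq X),
\]
which is condition (iii) of Proposition~\ref{equivalent}.

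Given this reduction, Proposition~\ref{equivalent} immediately gives the equivalence with (i), namely that $c$ is a $\sV$-valued closure space structure on $X$. For the remark that the equality may be read in $\sV\text{-}{\bf Cat}$ just as well as in $\bf Set$, I would observe that, once $c$ is known to be such a structure, Corollary~\ref{Vfunctor} makes $c:\sP X\to\sV^X$ a $\sV$-functor, so both composites are compositions of $\sV$-functors and thus $\sV$-functors themselves; equality of their underlying set-maps then automatically yields equality in $\sV\text{-}{\bf Cat}$. (Conversely, an equality in $\sV\text{-}{\bf Cat}$ is in particular an equality in $\bf Set$.)

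For the final assertion, I would simply note that suprema in $\sP X$ are unions and suprema in $\sV^X$ are pointwise joins in $\sV$; hence, for a $\sV$-valued closure space structure $c$, preservation of finite suprema translates exactly to the conjunction $c\emptyset=\bot$ (constant value $\bot$) and $c(A\cup B)=cA\vee cB$, which is condition (A) in the definition of a $\sV$-valued topological space.

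There is no real obstacle here; the content is entirely a matter of unwinding the Yoneda and power-set functors so as to recognise the categorical equation as the already-established criterion (iii). The only point requiring a brief comment is the equivalence of the equation in $\sV\text{-}{\bf Cat}$ and in $\bf Set$, handled by appeal to Corollary~\ref{Vfunctor}.
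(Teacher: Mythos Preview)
Your proposal is correct and follows exactly the paper's approach: the paper's one-line proof simply states that ``the equality of the two composite maps simply rephrases condition (iii) of Proposition~\ref{equivalent},'' which is precisely what you obtain by your pointwise unwinding of $\sy_{\sV^X}$ and $(-)^!$. Your additional remarks on the $\sV\text{-}{\bf Cat}$ versus ${\bf Set}$ equivalence (via Corollary~\ref{Vfunctor}) and on the finite-suprema characterization of (A) merely make explicit what the paper leaves implicit.
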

\begin{proof}
The equality of the two composite maps simply rephrases condition (iii) of Proposition \ref{equivalent}.
\end{proof}

\begin{rem}
Since, by the {\em Yoneda Lemma}, or by an easy direct inspection, $[c_{\rm{disc}}\{x\},\sigma]=\sigma(x)$ for all $\sigma\in\sV^X, x\in X$, so that $(\{-\}\cdot c_{\rm{disc}})^!\cdot{\sf y}_{\sV^X}={\rm id}_{\sV^X}$,  the map $c$ may be recovered from the composite map of Corollary \ref{composite}, as
\[c= ( \sP X\to^c\sV^X\to^{{\sf y}_{\sV^X}}\sV^{\sV^X}\to^{c_{\rm{disc}}^!}\sV^{\sP X}\to^{\{-\}^!}\sV^X),\]
with $\{-\}:X\to\sP X$.
\end{rem}

In conclusion of this section, we see that the syntax needed to define $\sV$-valued closure spaces and $\sV$-valued topological spaces can be seen as living in $\sV$-{\bf Cat}, and that the axioms defining them are equational. Hence, when we consider these objects together with {\em closure preserving} maps as their morphisms (so that the  inequality of the continuity condition (C) becomes an equality), we obtain categories that are equationally defined within the $\sV$-{\bf Cat} environment. In particular, topological spaces, with closed continuous maps as their morphisms, form a category that is equationally defined within the realm of {\bf Ord}, the category of preordered sets and monotone maps.
$$ $$

\end{document}